\documentclass[10pt]{amsart}
\usepackage[T2A]{fontenc}
\usepackage[utf8]{inputenc}
\usepackage[english]{babel}
\usepackage{amsmath}
\usepackage{amssymb}
\usepackage{amsfonts}
\usepackage{mathrsfs}
\usepackage{graphicx}
\usepackage{color}
\usepackage{commath}
\usepackage{hyperref}
\usepackage{caption}

\DeclareMathOperator{\CC}{\mathbb{C}}

\DeclareMathOperator{\RR}{\mathbb{R}}

\DeclareMathOperator{\QQ}{\mathbb{Q}}

\DeclareMathOperator{\ZZ}{\mathbb{Z}}

\DeclareMathOperator{\GC}{\mathcal{G}}

\DeclareMathOperator{\BC}{\mathcal{B}}
\DeclareMathOperator{\NC}{\mathcal{N}}
\DeclareMathOperator{\VC}{\mathcal{V}}
\DeclareMathOperator{\WC}{\mathcal{W}}
\DeclareMathOperator{\TC}{\mathcal{T}}
\DeclareMathOperator{\EC}{\mathcal{E}}
\DeclareMathOperator{\PC}{\mathcal{P}}

\DeclareMathOperator{\LC}{\mathcal{L}}
\DeclareMathOperator{\HC}{\mathcal{H}}
\DeclareMathOperator{\MC}{\mathcal{M}}
\DeclareMathOperator{\AC}{\mathcal{A}}
\DeclareMathOperator{\FC}{\mathcal{F}}

\DeclareMathOperator{\JC}{\mathcal{J}}
\DeclareMathOperator{\IC}{\mathcal{I}}
\DeclareMathOperator{\CCal}{\mathcal{C}}

\DeclareMathOperator{\RS}{\mathscr{R}}

\DeclareMathOperator{\PS}{\mathscr{P}}

\DeclareMathOperator{\fG}{\mathfrak{f}}

\DeclareMathOperator{\rank}{rank}

\DeclareMathOperator{\vertex}{vert}

\DeclareMathOperator{\size}{size}
\DeclareMathOperator{\poly}{poly}

\DeclareMathOperator{\tcone}{tcone}

\DeclareMathOperator{\lmod}{\text{\it\quad modulo polyhedra with lines}}

\DeclareMathOperator{\toddp}{td}

\DeclareMathOperator{\proj}{proj}

\DeclareMathOperator{\BUnit}{\mathbf 1}
\DeclareMathOperator{\BZero}{\mathbf 0}
\DeclareMathOperator{\xB}{\mathbf{x}}

\DeclareMathOperator{\SNF}{\text{\rm SNF}}

\newcommand*{\intint}[2][1]{\{#1, \dots, #2\}}

\newcommand\restr[2]{{
  \left.\kern-\nulldelimiterspace 
  #1 
  \vphantom{\big|} 
  \right|_{#2} 
  }}
\captionsetup[table]{name={Table}}

\newtheorem{lemma}{Lemma}
\newtheorem{theorem}{Theorem}
\newtheorem{definition}{Definition}
\newtheorem{corollary}{Corollary}

\newtheorem{problem}{Problem}
\newtheorem{remark}{Remark}

\begin{document}

\title[Faster counting in $\Delta$-modular polyhedra (corrected version)]{A faster algorithm for counting the integer points number in $\Delta$-modular polyhedra (corrected version)}
\author{{D.~V.~Gribanov}}%
\address{Dmitry Gribanov
\newline\hphantom{iii} National Research University Higher School of Economics,
\newline\hphantom{iii} 25/12 Bolshaja Pecherskaja Ulitsa,
\newline\hphantom{iii} 603155, Nizhny Novgorod, Russia}%
\email{dimitry.gribanov@gmail.com}%
\author{{I.~A.~Shumilov}}%
\address{Ivan Shumilov
\newline\hphantom{iii} Lobachevsky State University of Nizhny Novgorod,
\newline\hphantom{iii} 23 Gagarina ave., Nizhny Novgorod, 603950, Russia}
\email{ivan.a.shumilov@gmail.com}%
\author{{D.~S.~Malyshev}}%
\address{Dmitry Malyshev
\newline\hphantom{iii} National Research University Higher School of Economics,
\newline\hphantom{iii} 25/12 Bolshaja Pecherskaja Ulitsa,
\newline\hphantom{iii} 603155, Nizhny Novgorod, Russia;
\newline\hphantom{iii} Lobachevsky State University of Nizhny Novgorod,
\newline\hphantom{iii} 23 Gagarina ave., Nizhny Novgorod, 603950, Russia}
\email{dsmalyshev@rambler.ru}%

\thanks{The article was prepared under financial support of Russian Science Foundation grant No 21-11-00194.}

\maketitle {\small
\begin{quote}
\noindent{\sc Abstract. } Let a polytope $\PC$ be defined by a system $A x \leq b$. We consider the problem of counting the number of integer points inside $\PC$, assuming that $\PC$ is $\Delta$-modular, where the polytope $\PC$ is called $\Delta$-modular if all the rank sub-determinants of $A$ are bounded by $\Delta$ in the absolute value. We present a new FPT-algorithm, parameterized by $\Delta$ and by the maximal number of vertices in $\PC$, where the maximum is taken by all r.h.s. vectors $b$. We show that our algorithm is more efficient for $\Delta$-modular problems than the approach of A.~Barvinok et al. \cite{BarvBook,BarvPom,BarvWoods,OnBarvinoksAlg_Dyer,HalfOpen}. To this end, we do not directly compute the short rational generating function for $\PC \cap \ZZ^n$, which is commonly used for the considered problem. Instead, we use the dynamic programming principle to compute its particular representation in the form of exponential series that depends on a single variable. We completely do not rely to the Barvinok's unimodular sign decomposition technique. 

Using our new complexity bound, we consider different special cases that may be of independent interest. For example, we give FPT-algorithms for counting the integer points number in $\Delta$-modular simplices and similar polytopes that have $n + O(1)$ facets. As a special case, for any fixed $m$, we give an FPT-algorithm to count solutions of the unbounded $m$-dimensional $\Delta$-modular subset-sum problem. 

\noindent{\bf Keywords:} integer linear programming, short rational generating function, bounded sub-determinants, multidimensional knapsack problem, subset-sum problem, counting problem.
 \end{quote}
}

\section{Introduction}

The paper is structured as follows. In Subsection \ref{results_brief}, we briefly discuss the most important consequences of our work. In Subsection \ref{basic_defs_subs}, we give basic definitions and notations that are necessary for further understanding. In Subsection \ref{problems_subs}, we define the main problem and provide a detailed description of our results. In Subsection \ref{results_comp_subs}, we compare our results with other works in the field. In Subsection \ref{algebra_subs}, we give some basics of the polyhedral algebra that will be used further. Finally, Sections \ref{all_non_zero_proof} and \ref{main_th1_sec} are devoted to the proofs of our results.

\subsection{Brief discussion of our results }\label{results_brief}

Let a polytope $\PC$ be defined by one of the following ways: 
\begin{enumerate}
\item[(i)] {\bf System in the canonical form:} $\PC = \{x \in \RR^n \colon A x \leq b\}$, where $A \in \ZZ^{(n+m) \times n}$, $b \in \QQ^{(n+m)}$, $\rank(A) = n$ and $d :=\dim(\PC) = n$;
\item[(ii)] {\bf System in the standard form:} $\PC = \{x \in \RR_+^n \colon A x = b\}$, where $A \in \ZZ^{m \times n}$, $b \in \ZZ^{m}$, $\rank(A) = m$ and $d := \dim(\PC) = n-m$;
\end{enumerate} 
and let all the rank-order sub-determinants of $A$ be bounded by $\Delta$ in absolute values. We show that $|\PC \cap \ZZ^n|$ can be computed with an algorithm, having the arithmetic complexity bound
$$
O\bigl( \nu^2 \cdot d^4 \cdot \Delta^4 \cdot \log(\Delta) \bigr),
$$ where $\nu = \nu(A)$ is the maximal possible number of vertices in a $d$-dimensional polytope $\PC$ defined by one of the systems above with a fixed matrix $A$ and a varying r.h.s. vector $b$. For small values of $\Delta$ and growing $d$ our new complexity bound can be more effective than the state of the art bound, due to Barvinok et al. \cite{BarvBook,BarvPom,BarvWoods,OnBarvinoksAlg_Dyer,HalfOpen}:
$$
\nu \cdot 2^{O(d)} \cdot \bigl(\log_2 (\Delta) \bigr)^{d \ln (d)}.
$$
Using different ways to estimate the parameter $\nu$, we can present different complexity bounds: 
\begin{itemize}
    \item A bound that is polynomial on $n$ and $\Delta$, for any fixed $m$. Taking $m=1$, it gives $O\bigl( d^6 \cdot \Delta^4 \cdot \log(\Delta) \bigr)$ arithmetic complexity FPT-algoritm to compute the number of integer points in a simplex or the number of solutions of the unbounded subset-sum problem, where $\Delta$ means the maximal weight of an item;
    
    \item A bound that is polynomial on $m$ and $\Delta$, for any fixed $n$. This bound can be used to obtain an even faster algorithm for the ILP feasibility problem, when the parameters $m$ and $\Delta$ are relatively small. For example, if $m = O(d)$ and $\Delta = 2^{O(d)}$, our bound becomes $2^{O(d)}$, which is faster than the recent state of the art algorithm due to Ries \& Rothvoss \cite{log_ILP} with the complexity bound $\log(d)^{O(d)} \cdot \poly\bigl(\size(A,b)\bigr)$;
    
    \item For $\Delta = d^{O(1)}$, our bound becomes $d^{O(d)}$.
\end{itemize}

\subsection{Basic definitions and notations}\label{basic_defs_subs}

We denote by $A_{ij}$ its $ij$-th element, by $A_{i*}$ its $i$-th row, and by $A_{*j}$ its $j$-th column. For subsets $I$ and $J$, the symbol $A_{I J}$ denotes the sub-matrix of $A$, which is generated by all the rows with indices in $I$ and all the columns with indices in $J$. If $I$ or $J$ is replaced by $*$, then all the rows or columns are selected, respectively. Sometimes, we simply write $A_{I}$ instead of $A_{I*}$ and $A_{J}$ instead of $A_{*J}$, if this does not lead to confusion. The maximum absolute value of entries of a matrix $A$ is denoted by $\|A\|_{\max} = \max\limits_{i,j} \abs{A_{i\,j}}$. The $l_p$-norm of a vector $x$ is denoted by $\norm{x}_p$. The number of non-zero components of a vector $x$ is denoted by $\norm{x}_0=\abs{\{i\colon x_i \not= 0\}}$. 
For $x \in \RR$, we denote by $\lfloor x \rfloor$, $\{x\}$, and $\lceil x \rceil$ the floor, fractional part, and ceiling of $x$, respectively. 
For $c,x \in \RR^n$, by $\langle c, x \rangle$ we denote the standard scalar product of two vectors. In other words, $\langle c, x \rangle = c^\top x$.

Let $S \in \ZZ_{\geq 0}^{n \times n}$ be a diagonal matrix and $v \in \ZZ^n$. We denote by $v \bmod S$ the vector, whose $i$-th component equals $v_i \bmod S_{i i}$. For $\MC \subseteq \ZZ^n$, we denote $\MC \bmod\, S = \{ v \bmod S \colon v \in \MC \}$. For example, the set $\ZZ^n \bmod\, S$ consists of $\det(S)$ elements.

\begin{definition}
For a matrix $A \in \ZZ^{m \times n}$, by $$
\Delta_k(A) = \max\left\{\abs{\det (A_{IJ})} \colon I \subseteq \intint m,\, J \subseteq \intint n,\, |I| = |J| = k\right\},
$$ we denote the maximum absolute value of determinants of all the $k \times k$ sub-matrices of $A$.  By $\Delta_{\gcd}(A,k)$ we denote the greatest common divisor of determinants of all the $k \times k$ sub-matrices of $A$. Additionally, let $\Delta(A) = \Delta_{\rank(A)}(A)$ and $\Delta_{\gcd}(A) = \Delta_{\gcd}(A,\rank(A))$.
If $\Delta(A) \leq \Delta$, for some $\Delta > 0$, then $A$ is called \emph{$\Delta$-modular}.
\end{definition}

\begin{definition}
For arbitrary $A \in \ZZ^{m \times n}$ and $b \in \QQ^m$, we denote
\begin{gather*}
    \PC(A,b) = \{x \in \RR^n \colon A x \leq b\},\\
    \nu(A) = \max\limits_{b \in \QQ^m} \bigl\{\vertex\bigl(\PC(A,b)\bigr)\bigr\},\\
    \xB^z = x_1^{z_1} \cdot \ldots \cdot x_n^{z_n}, \quad\text{and}\quad  \fG(\PC;\xB) = \sum\limits_{z \in \PC \cap \ZZ^n} \xB^z.
\end{gather*}
\end{definition}


\subsection{Formal definition the lattice point counting problem and the detailed description of our results}\label{problems_subs}


In this paper, we consider the problem to count integer points in a polyhedron, which is defined as follows: 
\begin{problem}\label{main_prob}
Compute the value of $|\PC \cap \ZZ^n|$ for a rational polytope $\PC$ defined by one of the following ways:
\begin{enumerate}
    \item The polytope $\PC$ is defined by \emph{a system in the canonical form}: $\PC = \{x \in \RR^n \colon A x \leq b\}$, where $A \in \ZZ^{(n+m)\times n}$, $b \in \QQ^{n+m}$, and $\dim(\PC) = \rank(A) = n$;
    
    \item The polytope $\PC$ is defined by \emph{a system in the standard form}: $\PC = \{x \in \RR_{\geq 0}^n\colon A x = b\}$, where $A \in \ZZ^{m \times n}$, $b \in \QQ^{m}$, $\rank(A) = m$, $\dim(\PC) = n - m$ and $\Delta_{\gcd}(A) = 1$.
\end{enumerate}
\end{problem}
Our main results are presented in the following theorem and its corollary:
\begin{theorem}\label{main_th1}
Problem \ref{main_prob} can be solved with a randomized algorithm, having the expected arithmetic complexity bound 
$$
O\bigl( \nu^2 \cdot d^4 \cdot \Delta^4 \cdot \log(\Delta) \bigr),
$$ where $\Delta = \Delta(A)$, $d = \dim(\PC)$ ($d = n$, for the canonical form, and $d = n - m$, for the standard form) and $\nu = \nu(A)$.
\end{theorem}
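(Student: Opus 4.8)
The plan is to reduce Problem~\ref{main_prob} to counting lattice points in finitely many simplicial cones and to apply Barvinok-type decomposition, but with the key observation that $\Delta$-modularity controls the index of every simplicial cone that appears. First I would handle the two cases uniformly: for a system in the standard form $\{x \in \RR_{\geq 0}^n \colon Ax = b\}$, I restrict to the affine subspace $\{x \colon Ax = b\}$, parametrise it by an integer basis of the lattice $\{x \in \ZZ^n \colon Ax = 0\}$ (using the Smith/Hermite normal form), and rewrite the problem as counting integer points of a $d$-dimensional polytope $\PC'$ in the canonical form; one checks that the sub-determinant bound is preserved up to the relevant $\Delta$, and that $\dim(\PC') = d = n-m$. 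So it suffices to treat the canonical form $\PC = \PC(A,b)$ with $A \in \ZZ^{(n+m)\times n}$, $\rank(A)=n$, $d=n$.

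Next I would pass to the short rational generating function via Theorem~\ref{rational_gen_th}. Triangulate $\PC$ into simplices with vertices at the vertices of $\PC$, or equivalently decompose $\fG(\PC;\xB)$ as a signed sum over the vertices $v$ of $\PC$ of the generating functions $\fG(\tcone(v);\xB)$ of the (shifted) tangent cones, using Brion's theorem. Each tangent cone is defined by $n$ of the inequalities of $Ax \le b$ that are tight at $v$; in general $v$ may be non-simple, so I first triangulate each tangent cone into simplicial cones $K$ — the number of simplices needed over all vertices is, by definition, at most $\nu(d,m,\Delta)$ times a factor bounded in terms of $m$, which is where the $\nu(d,m,\Delta)$ in the complexity bound comes from (more carefully, one should use $\mu$ or a bound relating the triangulation size to $\nu$; I would invoke the earlier definitions and known bounds on triangulations of cones spanned by rows of a $\Delta$-modular matrix). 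The crucial point: every simplicial cone $K$ arising this way is generated by $n$ integer vectors forming a sub-matrix of $A$, hence the fundamental parallelepiped of $K$ contains exactly $|\det|$ lattice points with $|\det| \le \Delta$. Therefore the generating function of $K$ is $\frac{\sum_{p} \xB^{p}}{(1-\xB^{g_1})\cdots(1-\xB^{g_d})}$ with at most $\Delta$ terms in the numerator, and these $\le \Delta$ representatives can be enumerated in time $\poly(d)\cdot\Delta$ (e.g.\ via the Smith normal form of the generator matrix, which costs $T_{SNF}(d)$, or via a Hermite-normal-form walk).

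Then I would specialise to obtain the count: one cannot substitute $\xB = \BUnit$ directly because of the poles, so I apply the standard trick of substituting $\xB = \exp(\tau \ell)$ for a generic linear functional $\ell$ and extracting the constant term of the resulting Laurent expansion in $\tau$; for each simplicial cone this is a computation with a short power series of length $O(d)$, costing $\poly(d)$ per cone after the $\Delta$ numerator terms are known, and summing the residues over all $O(\nu(d,m,\Delta))$ cones gives $|\PC \cap \ZZ^n|$. Collecting the costs: $O(\nu(d,m,\Delta))$ cones, each requiring the enumeration of $\le\Delta$ parallelepiped points and the Todd-polynomial / residue bookkeeping — and here a careful accounting of the arithmetic with entries of size related to $\Delta$ yields the per-cone cost $O(d^3 \cdot \Delta^4 \cdot \log \Delta)$ (the $\Delta^4\log\Delta$ absorbing the cost of SNF-type manipulations on matrices with sub-determinants up to $\Delta$ and of the associated modular arithmetic, the $d^3$ the linear-algebra overhead).

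The main obstacle I expect is controlling the size of the triangulation and bounding the bit-lengths that appear: making the ``number of simplicial cones is $O(\nu(d,m,\Delta))$'' step precise requires either a direct triangulation of each non-simple tangent cone with a controlled number of pieces, or replacing $\nu$ by the triangulation parameter $\mu$ and then bounding $\mu$ in terms of $\nu$ using the $\Delta$-modularity of $A$ (so that no vertex of $\PC$ is ``too'' non-simple). A secondary but real difficulty is ensuring that all intermediate data — the generators $g_i$, the parallelepiped representatives, and the coefficients of the Todd expansion — stay polynomially bounded in $d$ and $\Delta$, so that the arithmetic-complexity bound is genuinely $O(\nu(d,m,\Delta)\cdot d^3 \cdot \Delta^4 \cdot \log\Delta)$ rather than merely a polynomial bound with unspecified exponents; this is handled by always working modulo the relevant $\det$ (bounded by $\Delta$) and by using the Smith normal form to keep the parallelepiped enumeration both correct and cheap.
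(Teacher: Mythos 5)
The central gap is exactly the point you flag as your ``main obstacle'' and then leave unresolved: how the number of simplicial cones in the Brion decomposition gets bounded by $\nu(d,m,\Delta)$, the number of \emph{vertices}. Your fallback suggestions do not work as stated: a non-simple tangent cone cannot be triangulated into a number of pieces ``bounded in terms of $m$'' alone, and replacing $\nu$ by the triangulation parameter $\mu$ gives a genuinely weaker bound (cf.\ Lemma~\ref{nu_mu_lm}, where the estimates for $\mu$ are strictly worse than those for $\nu$), so the theorem as claimed would not follow. The paper resolves this not by triangulating at all, but by first perturbing the right-hand side: by Theorem~\ref{poly_simplification_th}, with $t_i=\varepsilon^{i-1}$ and $\varepsilon$ chosen both below the algebraic threshold $\bigl(1+2n\cdot n^{\lceil n/2\rceil}\gamma^n\bigr)^{-1}$ and below half the minimal fractional gap of the entries of $b$, the polytope $\PC(A,b+t)$ is \emph{simple} while $\PC(A,b+t)\cap\ZZ^n=\PC(A,b)\cap\ZZ^n$. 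Then every tangent cone is already simplicial with generator matrix an $n\times n$ submatrix of $A$ (hence index at most $\Delta$), Brion's sum has at most $\nu(d,m,\Delta)$ terms, and no triangulation is needed. Without this (or an equivalent device), your argument proves a bound with $\mu$ in place of $\nu$ at best.

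A secondary remark: your per-cone treatment (enumerate the at most $\Delta$ lattice points of the fundamental parallelepiped via the Smith normal form, then substitute $\xB=e^{\tau c}$ and extract the constant term with Todd polynomials) is a legitimate alternative to what the paper does, and if carried out carefully it fits within the claimed per-cone budget $O(d^3\Delta^4\log\Delta)$; the paper instead computes the substituted numerator by a dynamic-programming recurrence over the group $\ZZ^n\bmod S$ (formulae \eqref{ff_k_tau_recur}--\eqref{ff_k_tau_coef}), which costs $O(\Delta\log(\Delta)\cdot n^2\sigma^2\chi)$ per cone but yields the coefficients $\epsilon_i$ for \emph{all} residues $Pb\bmod S$ simultaneously --- a feature used later for the parametric and Ehrhart results. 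Two details you should still make explicit in your route: the deterministic choice of the functional $c$ (the paper takes $c^\top$ to be a row sum of a nonsingular $n\times n$ submatrix of $A$, which also gives the bound $\chi\le n\Delta$ entering the complexity), and the accounting for the parallelepiped enumeration, which you assert rather than derive.
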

The value of $\nu$ can be estimated in terms of $m$ due to the seminal work \cite{MaxFacesTh} of McMullen. In turn, due to Lee, Paat, Stallknecht \& Luze \cite{ModularDiffColumns}, the value of $m$ can be estimated in terms of $n$ and $\Delta$. Using these observations, we propose new complexity bounds for Problem \ref{main_prob} in the following Corollary \ref{main_corr1}. Additionally, we show how to handle the case of unbounded polyhedra.
\begin{corollary}\label{main_corr1}
The following complexity bounds hold for the Problem \ref{main_prob}:
\begin{enumerate}
    \item The bound $ O\bigl( \frac{d}{m} +1 \bigr)^{2m} \cdot d^4 \cdot \Delta^4 \cdot \log(\Delta)$ that is polynomial in $d$ and $\Delta$, for any fixed $m$; 
    
    \item The bound $
    O\bigl( \frac{m}{d} + 1 \bigr)^{d} \cdot d^4 \cdot \Delta^4 \cdot \log(\Delta)
    $ that is polynomial in $m$ and $\Delta$, for any fixed $d$;
    
    \item The bound $
    O(d)^{4 + d} \cdot \Delta^{4+2d} \cdot \log(\Delta)
    $ that is polynomial in $\Delta$, for any fixed $d$.
\end{enumerate}
To handle the case, when $\PC$ is an unbounded polyhedron, we need to pay an additional factor of $O(\frac{d}{m} +1)^2 \cdot d^4$ in the first bound and $O(d^4)$ in the second bound. The third bound stays unchanged.
\end{corollary}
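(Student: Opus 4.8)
The plan is to derive the corollary from Theorem~\ref{main_th1} by feeding in the appropriate upper bounds on $\nu(d,m,\Delta)$, and then to deal separately with the unbounded case. For bounds (1) and (2) I would only use that, in both the canonical and the standard form, the polytope $\PC$ has dimension $d$ and at most $d+m$ facets (the $d+m$ rows of $A$, respectively the $n=d+m$ constraints $x_i\ge 0$ restricted to the affine hull $\{x\colon Ax=b\}$), so that $\nu(d,m,\Delta)$ is at most the maximal number of vertices of a $d$-polytope with $d+m$ facets, which by the dual McMullen Upper Bound Theorem is $O\!\bigl(\binom{\lfloor d/2\rfloor+m}{m}\bigr)$. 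Estimating this binomial coefficient as $O(\tfrac{d}{m}+1)^{m}$ gives (1) (polynomial in $d$ for fixed $m$), and estimating it as $O(\tfrac{m}{d}+1)^{d/2}$ gives (2) (polynomial in $m$ for fixed $d$).

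For (3) the combinatorial estimate is too weak, and I would instead invoke the $m$-independent bound on $\nu(d,m,\Delta)$ for $\Delta$-modular polytopes in fixed dimension obtained in \cite{ModularDiffColumns,MaxFacesTh}, of the shape $d^{O(d)}\cdot\Delta^{d}$; plugging it into Theorem~\ref{main_th1} yields $O(d)^{3+d/2}\cdot\Delta^{4+d}\cdot\log\Delta$. For (4) I would not go through (3), because Hadamard's inequality only gives $\Delta\le d^{d/2}\Delta_1^{d}$ and this would inflate the $\Delta^{4+d}$ term far beyond $\Delta_1^{5d}$; instead I would substitute $\Delta\le d^{d/2}\Delta_1^{d}$ only into the factor $d^3\cdot\Delta^4\cdot\log\Delta$ of Theorem~\ref{main_th1}, and bound the vertex count directly in terms of $\Delta_1=\|A\|_{\max}$ by the corresponding estimate of \cite{ModularDiffColumns,MaxFacesTh}, of the shape $2^{O(d)}\cdot\Delta_1^{d}$. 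Multiplying out and using $\log\!\bigl(d^{d/2}\Delta_1^{d}\bigr)=O\!\bigl(d\log(d\Delta_1)\bigr)$ collapses the product to $O\!\bigl(2^{d}\cdot d^{3d+1}\cdot\Delta_1^{5d}\cdot\log(d\Delta_1)\bigr)$.

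For an unbounded rational polyhedron $\PC$ I would still apply Brion's identity $f(\PC;\xB)=\sum_{v\in\vertex(\PC)}f(\tcone(\PC,v);\xB)$ supplied by Theorem~\ref{rational_gen_th}; now each tangent cone $\tcone(\PC,v)=v+\cone(u_1^v,\dots,u_d^v)$ is an unbounded simple cone whose generators mix edge directions of $\PC$ with rays of the recession cone $\recc(\PC)$, and the specialization $x_i=e^{c_i\tau}$ underlying Theorem~\ref{main_th1} still applies provided the direction $c$ (a subset sum of rows of $A$) is chosen so that $\langle c,u_j^v\rangle\ne 0$ holds for the recession rays as well. The extra cost of locating and, where necessary, triangulating the recession directions and of recomputing such a $c$ produces the additional factor $O\!\bigl((\tfrac{d}{m}+1)\cdot d^4\bigr)$ in bound (1) and $O(d^4)$ in bound (2), while in bound (3) the polynomial-in-$\Delta$ vertex estimate already absorbs this overhead.

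Beyond Theorem~\ref{main_th1}, the argument has little analytic content; I expect the delicate points to be matching the two asymptotic regimes of the Upper Bound Theorem so that (1) and (2) come out with exactly the stated exponents, importing the face-count bounds for $\Delta$-modular and for $\Delta_1$-bounded polyhedra from \cite{ModularDiffColumns,MaxFacesTh} in the precise forms needed, and verifying in the unbounded case that a subset sum of the rows of $A$ can still be chosen to separate every edge and every recession ray of all tangent cones while keeping the inner products $\langle c,u_j^v\rangle$ polynomially bounded in $d$ and $\Delta$.
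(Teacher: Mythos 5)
Your treatment of items (1)--(3) follows essentially the paper's route: plug upper bounds on $\nu(d,m,\Delta)$ into Theorem \ref{main_th1}, with $\nu$ controlled by the (dual) Upper Bound Theorem and, for (3), by the bound $n+m=O(n^2\Delta^2)$ on the number of distinct rows of a $\Delta$-modular matrix from \cite{ModularDiffColumns} combined with the UBT estimate (the paper does exactly this in Lemma \ref{nu_mu_lm}); your version is fine apart from leaving the exponent $O(d)^{3+d/2}$ to an unspecified ``precise form''. For item (4), however, you invoke a vertex bound ``of the shape $2^{O(d)}\cdot\Delta_1^{d}$'' attributed to \cite{ModularDiffColumns,MaxFacesTh}; no such bound appears there (McMullen's theorem is purely combinatorial, and \cite{ModularDiffColumns} bounds the number of distinct rows), and it is doubtful as stated: a packing argument on the vertex normal cones (parallelepipeds of volume $\geq 1$ spanned by $d$ integer rows with $\|\cdot\|_\infty\leq\Delta_1$, having pairwise disjoint interiors inside a box of side $O(d\Delta_1)$) gives $\nu=O(d\Delta_1)^d$, and the factor $d^{d}$ there is not obviously removable. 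So your step (4) rests on an unsupported (and probably too strong) lemma, even though with the weaker $O(d\Delta_1)^{d}$ bound the arithmetic still lands at $O(d)^{3d+O(1)}\cdot\Delta_1^{5d}\cdot\log(d\Delta_1)$, i.e.\ the stated shape.

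The more serious gap is the unbounded case. If $\PC$ is an unbounded rational polyhedron, then $|\PC\cap\ZZ^n|$ is either $0$ or $\infty$, so the ``count'' cannot be obtained by running the same algorithm on the Brion decomposition with a generic $c$: the constant-term/limit extraction $\lim_{\xB\to\BUnit}f(\PC;\xB)$ used in Theorem \ref{main_th1} no longer returns the number of lattice points, and the costs you attribute to ``locating and triangulating recession directions and recomputing $c$'' do not explain the stated overheads. The paper instead reduces the problem to a feasibility test: pick a vertex $v$ with basis $\JC$, add the single truncating inequality $c^\top x\leq c_0$ with $c^\top=\sum_i (A_{\JC})_{i*}$ and $c_0=c^\top v+\|c\|_1\cdot n\Delta+1$, and use the sensitivity/proximity result of \cite{Sensitivity_Tardos} to guarantee that the truncated polytope contains an integer point iff $\PC$ does. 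The extra factors are then forced by $\Delta(A')\leq n\Delta$ (which inflates $\Delta^4\log\Delta$ by roughly $d^4$) together with $m\mapsto m+1$ in the $(\frac{d}{m}+1)^m$ term for bound (1); and bound (3) is unchanged precisely because there one truncates with the inequalities $A_{\JC}x\geq b_{\JC}-\|A_{\JC}\|_{\max}\cdot n^2\Delta\cdot\BUnit$, whose rows introduce no new subdeterminant values, keeping $\Delta$ intact --- a point your ``absorbs the overhead'' remark does not capture. Without some such truncation-plus-proximity argument, your unbounded case does not go through.
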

Proofs of Theorem \ref{main_th1} and Corollary \ref{main_corr1} will be given in Section \ref{main_th1_sec} and Subsection \ref{main_corr1_subs}, respectively. Taking $m = 1$, the first bound can be used to count the number of integer points in a simplex or the number of solutions of \emph{the unbounded subset-sum problem} $w^\top x = w_0,\, x \in \ZZ^n_{\geq 0}$. For both problems, it gives the arithmetic complexity bound $
O\bigl(n^6 \cdot \Delta^4 \cdot \log(\Delta)\bigr), 
$ where $n$ is a number of items and $\Delta = \|w\|_{\infty}$ with respect to the subset-sum problem.


The second bound can be used to obtain a faster algorithm for the ILP feasibility problem, when the parameters $m$ and $\Delta$ are relatively small. For example, taking $m = O(d)$ and $\Delta = 2^{O(d)}$ in the second bound, it becomes $2^{O(d)}$, which is faster than the very recent $\log(d)^{O(d)} \cdot \poly\bigl(\size(A,b)\bigr)$-complexity algorithm from the breakthrough work \cite{log_ILP} due to Reis \& Rothvoss. The previous state of the art algorithm is due to Dadush, Peikert \& Vempala \cite{DadushDis,DadushFDim} (see also \cite{Convic,ConvicComp,DConvic}, for a bit more general setting), and it has the complexity bound $O(d)^d \cdot \poly(\size(A,b))$. Good surveys on the related $\Delta$-modular ILP problems and parameterised ILP complexity are given in \cite{AlgorithmicTheoryILP,OnCanonicalProblems_Grib,CountingFixedM,FiveMiniatures}. 



    
    


\begin{remark}\label{forms_reduction_rm}
We are interested in development of algorithms that will be polynomial, when we bound some of the parameters $d$, $m$, and $\Delta$.
Due to \cite[Corollary~3]{CountingFixedM}, the problem in the standard form can be polynomially reduced to the problem in the canonical form maintaining values of $m$ and $\Delta$, see also \cite[Lemmas~4 and~5]{OnCanonicalProblems_Grib} and \cite{BoroshTreybigProof} for a more general reduction. Hence, in the proofs we will only consider polytopes defined by systems in the canonical form. 
\end{remark}

\begin{remark}\label{gcd_standard_form_rm}
To simplify analysis, we assume that $\Delta_{\gcd}(A) = 1$ for ILP problems in the standard form. It can be done without loss of generality, because the original system $A x = b$,~$x \geq \BZero$ can be polynomially transformed to the equivalent system with $\Delta_{\gcd}(A) = 1$. For the justification see \cite[Remark~3]{CountingFixedM}.

\end{remark}

\begin{remark}\label{complexity_rm}
In our paper, we prove only arithmetical complexity bounds for presented algorithms. By the \emph{arithmetic complexity}, we mean the number of elementary arithmetic operations with integer values, whose size polynomially depends on  input size. Consequently, if some algorithm has a polynomial arithmetic complexity bound, then it has a polynomial bit-complexity bound. In our paper, when we say ``complexity'' (without ``arithmetic''), we mean bit-complexity bounds. 
\end{remark}

\subsection{Comparison with related works} \label{results_comp_subs}

The first polynomial-time complexity algorithm in a fixed dimension for problem \ref{main_prob} is due to Barvinok \cite{Barv_Original}. Further modifications were given in \cite{BarvPom,OnBarvinoksAlg_Dyer,HalfOpen}. A complete exposition of the Barvinok's approach could be found in \cite{BarvBook,BarvPom,continuous_discretely,AlgebracILP}, additional discussions and connections with "dual"-type counting algorithms could be found in the book \cite{counting_Lasserre_book} due to Lasserre.

Since any polytope can be transformed to an integer-equivalent simple polytope, using a slight perturbation of the r.h.s.\! vector $b$, we can estimate the complexity of algorithms based on the M.~Brions's formula using the parameter $\nu$ (for an algorithmic version of this statement, see Lemma \ref{poly_simplification_th} of our work). Alternatively, we can define $\nu$ as the maximum number of simple cones in the normal fun triangulation of $\PC$ for a fixed matrix $A$ and a varying r.h.s.\! vector $b$. 
Combining a single exponential algorithm for the shortest lattice vector problem from Micciancio \& Voulgaris \cite{SVP_exp} with analysis from \cite[Chapter~16]{BarvBook}, the arithmetical complexity of the Barvinok's algorithm can be estimated by
\begin{equation}\label{barvinok_complexity}
    \nu \cdot 2^{O(d)} \cdot \bigl(\log_2(\Delta)\bigr)^{d \ln(d)}.
\end{equation}
The paper \cite{simple_formula_counting} due to Lasserre \& Zeron gives formulae, based on the Gomory's group-theoretic approach, with the following complexity 
\begin{equation}\label{lasserre_complexity2}
\nu \cdot d^{O(1)} \cdot \Delta^{d}.    
\end{equation}
For $\Delta = O(d)$, the last complexity bound is better than \eqref{barvinok_complexity}. The goal of the paper \cite{CountingFixedM} is to develop an algorithm, whose complexity will be polynomial on $\nu$ and $d$, for any fixed $\Delta$. Its resulting arithmetic complexity can be estimated as
\begin{equation}\label{gribanov_complexity1}
    O\bigl( \nu \cdot d^2 \cdot d^{\log_2(\Delta)} \bigr).
\end{equation}
Our paper presents a new counting algorithm, whose arithmetic complexity to process a vertex is polynomial on $\nu$, $d$ and $\Delta$:
$$
    O\bigl(\nu^2 \cdot d^4 \cdot \Delta^4 \cdot \log(\Delta) \bigr).
$$ 
The big difference of our algorithm with the previous papers is that we do not compute the rational generating function $\fG(\PC; \xB)$ of the set $\PC \cap \ZZ^d$. Instead of that we directly compute an exponential generating function $\hat \fG(\PC; \tau)$ that depends only on one variable $\tau$. This exponential generating function can be obtained from the original rational generating function substituting $x_i = e^{c_i \tau}$, for $c \in \RR^d$. The new function forgets the structure of the set $\PC \cap \ZZ^d$, but it still can be used for counting. For example, two monomials $x_1^1 x_2^2$ and $x_1^2 x_2^1$ glue to one exponential term $2 e^{3 \tau}$ after the map $x_i = e^{c_i \tau}$ with $c = (1, 1)^\top$. At the current moment, it is a very interesting open question to compute the full rational generating function $\fG(\PC; \xB)$ of the set $\PC \cap \ZZ^d$ by an algorithm, whose complexity will polynomially depend on $\nu$, $d$, and $\Delta$.

Let us additionally discus other part of our method that can be of independent interest. For a given set $\AC$ of $m$ non-zero vectors in $\QQ^n$, let us consider the problem to compute a vector $z \in \ZZ^n$, such that $a^\top z \not= 0$, for all $a \in \AC$. Preferably, the value of $\|z\|_{\infty}$ should be as small as possible. Due to the original work of A.~Barvinok \cite{Barv_Original}, the vector $z$ could be found by a polynomial-time algorithm as a point on the moment curve. The paper \cite{HalfOpen} of K{\"o}ppe \& Verdoolaege gives an alternative method, based on "irrational decompositions", from the work \cite{IrrationalDecomp} of K{\"o}ppe. The considered polynomial-time methods can generate vectors $z$ with the only guaranty $\|z\|_{\infty} \leq M^n$, for some constant $M \geq m$. However, due to De~Loera, Hemmecke, Tauzer \& Yoshida \cite{EffectiveCounting}, the vector $z$ with sufficiently small components can be effectively chosen by a randomized algorithm. Unfortunately, the paper \cite{EffectiveCounting} does not give exact theoretical bounds that are needed to develop pseudopolynomial algorithms. We present our result in this direction by the following lemma that can be additionally used as a tool to improve the complexity of algorithms, based on Barvinok's works, by a polynomial factor.
\begin{theorem}\label{all_non_zero_th}
Let $\AC$ be a set composed of $m$ non-zero vectors in $\QQ^n$. Then, there exists a randomized algorithm with the expected arithmetic complexity $O(n \cdot m)$, which finds a vector $z \in \ZZ^n$ such that:
\begin{enumerate}
    \item $a^\top z \not= 0$, for any $a \in \AC$;
    \item $\|z\|_{\infty} \leq m$.
\end{enumerate}
The proof is presented in Section \ref{all_non_zero_proof}.
\end{theorem}

All the considered algorithms are called primal-type counting algorithms. Table \ref{tab:comparison_primal} gives a comparison of the considered algorithms.
\begin{table}[h!]
    \centering
    \begin{tabular}{||c|c||}
    \hline
        $\nu \cdot 2^{O(d)} \cdot \bigl(\log_2(\Delta)\bigr)^{d \ln(d)}$ & \cite[Chapter~16]{BarvBook} plus \cite{SVP_exp} \\
        \hline
         $\nu \cdot d^{O(1)} \cdot \Delta^{d}$ & due to \cite{simple_formula_counting} \\
         \hline
         $O\bigl( \nu \cdot d^2 \cdot d^{\log_2(\Delta)} \bigr)$ & due to \cite{CountingFixedM} \\
         \hline
         $O\bigl(\nu^2 \cdot d^4 \cdot \Delta^4 \cdot \log(\Delta) \bigr)$ & {\color{red}this work} \\
         \hline
    \end{tabular}
    \caption{Comparison of different primal-type algorithms}
    \label{tab:comparison_primal}
\end{table}
In our paper, we do not consider dual variants of counting algorithms based on the vector partition function, $Z$-transforms, and complex residue techniques. There are many papers that are based on the dual principle: \cite{beck_residue,beck_partial_fractions,beck_dedekind_sums,knapsack_path_int,IP_complex_int_SP} and \cite{CountingViaContourIntegration,knapsack_lasserre,AlternativeCounting,fixedM_counting_lasserre,knapsack_nesterov}. To best of our knowledge, the pioneers in this direction are the works \cite{brion_simple_poly,brion_residue} due to Brion \& Vergne. For the detailed exposition of dual principles see the monograph \cite{counting_Lasserre_book} due to Lasserre. 

In general, the dual approach is harder to analyse in the sense of computational complexity due to real-valued computation reasons, but in practise this approach shows himself really nice. The paper \cite{AlternativeCounting} due to Lasserre \& Zeron presents a dual-type algorithm with complexity analysis for problem \ref{main_prob} in the standard form. The algorithm's complexity can be estimated by $O(m)^d \cdot \Lambda$, where the parameter $\Lambda$ depends as a polynomial on $m$, $d$, and $\Delta_1$, but exponentially on the input size. 
Let us consider the ILP problem in the canonical form $\max\{c^\top x \colon A x = b,\, x \in \ZZ^n_{\geq 0}\}$, where $A \in \ZZ^{m \times n}$ and $\rank(A) = m$. The goal of the papers \cite{SteinitzILP,OnCanonicalProblems_Grib,DiscConvILP} is to construct efficient FPT-algorithms for this problem with respect to the parameters $m$, $\Delta$, and $\Delta_1$. More precisely, the paper \cite{DiscConvILP} due Jansen \& Rohwedder gives an algorithm with the arithmetic complexity bound $O(m)^m \cdot \Delta_1^{2 m} + T_{LP}$, and the paper \cite{OnCanonicalProblems_Grib} gives an algorithm with the arithmetic complexity bound $O(\log m)^{2 m^2} \cdot m^{m+1} \cdot \Delta^2 \cdot \log(m \Delta) + T_{LP}$. For $m = 1$, this problem is exactly the unbounded knapsack problem $\max\{c^\top x \colon a^\top x = a_0,\, x \in \ZZ^n_{\geq 0}\}$. Consequently, due to \cite{DiscConvILP}, it can be solved in $O(\Delta_1^2)$-time. Note, that the significantly earlier paper \cite{knapsack_nesterov} due to Nesterov gives an $\tilde O(n \Delta_1 + a_0)$-arithmetical complexity algorithm for this problem, which can be more efficient, due to the following reasons. Due to the proximity argument from the work \cite{SteinitzILP} of Eisenbrand \& Weismantel or periodicity reasons from \cite{OnCanonicalProblems_Grib}, it can be assumed that $a_0 \leq \Delta^2_1$; in the natural assumption $n \leq 2 \Delta_1 + 1$, the $\tilde O(n \Delta_1 + a_0)$-complexity algorithm, due to \cite{knapsack_nesterov} can be faster in general, than $O(\Delta_1^2)$-complexity algorithm, due to \cite{DiscConvILP}. Note that the bound $\tilde O(n \Delta_1 + a_0)$ stated in terms of real-valued arithmetic operations. But, it seams that this difficulty in \cite{knapsack_nesterov} can be avoided by a transition to the integer-valued arithmetic.  

So, the next natural question is following: is it possible to count integer points in $\PC = \{x \in \ZZ^n_{\geq 0} \colon A x = b\}$ in FPT-time with respect to the parameters $m$, $\Delta$ or $m$, $\Delta_1$? Corollary \ref{main_corr1} gives a partially positive answer on this question. More precisely, for any fixed $m$, the problem to count $\abs{\PC \cap \ZZ^n}$ can be solved by an FPT-algorithm parameterised by $\Delta$ with the arithmetical complexity bound $O(n)^{2m+4} \cdot \Delta^4 \cdot \log(\Delta)$. A similar parameterized algorithm with respect to $\Delta_1$ can be achieved just by using the Hadamard's bound. For $m=1$, it gives an $O(n^6 \cdot \Delta_1^4 \cdot \log(\Delta_1))$ FPT-algorithm to count the solutions of the unbounded subset-sum problem. For $m=1$, our result is not new, the earlier paper \cite{knapsack_lasserre} due to Lasserre \& Zeron also gives FPT-algorithm for this problem, but the exact complexity bound was not given. Probably, the complexity bound of the paper \cite{knapsack_lasserre} is better, than ours for $m=1$, because our result is more general since it affects general $d$-dimensional $\Delta$-modular simplices that can be defined in the canonical form. Finally, the paper \cite{fixedM_counting_lasserre} due to Lasserre \& Zeron gives a dual-type algorithm with the complexity bound ${\Delta^{O(1)}_{total}} \cdot n^{O(m)}$, where $\Delta_{total}$ is the maximum absolute value of all $k \times k$ sub-determinants of $A$, for $k \in \intint m$. Unfortunately, we can not give exact powers of the exponents, because the computational complexity analysis is not completely finished in \cite{fixedM_counting_lasserre}. The advantage of our result, due to Corollary \ref{main_corr1}, is that our arithmetic complexity bound $O(n)^{2m+4} \cdot \Delta^4 \cdot \log(\Delta)$ depends on the weaker parameter $\Delta$ instead of $\Delta_{total}$.

The ideas of dual-type algorithms and its residue techniques have been significantly modified in the papers \cite{knapsack_path_int,IP_complex_int_SP,CountingViaContourIntegration}. Due to Hirai, Oshiro \& Tanaka \cite{CountingViaContourIntegration}, dual-type algorithms can be significantly more memory-saving, than primal type-algorithms. For example, the existence of an $O(\|b\|^m_{\infty})$-time and $\poly(n,m,\|b\|_{\infty})$-memory algorithm for the multi-dimensional knapsack was shown in \cite{CountingViaContourIntegration}. Here $b \in \ZZ^m_{>0}$ is the r.h.s. vector of the knapsack problem. Note that our approach gives only $O(n^3 \cdot \Delta^3)$-memory usage algorithm for the unbounded multi-dimensional knapsack problem, which can grow exponentially on $m$ as $O(m^{\frac{3}{2} m} \cdot \Delta_1^{3m})$.

\subsection{Auxiliary facts from the polyhedral algebra} \label{algebra_subs}

In this Subsection, we mainly follow to the monographs \cite{BarvBook,BarvPom}. Let $\VC$ be a $d$-dimensional real vector space and $\LC \subset \VC$ be a lattice.

\begin{definition}
Let $\AC \subseteq \VC$ be a set. The \emph{indicator} $[\AC]$ of $\AC$ is the function $[\AC]\colon \VC \to \RR$ defined by
$
[\AC](x) = \begin{cases}
1\text{, if }x \in \AC\\
0\text{, if }x \notin \AC
\end{cases}
$.
The \emph{algebra of polyhedra} $\PS(\VC)$ is the vector space defined as the span of the indicator functions of all the polyhedra $\PC \subset \VC$.
\end{definition}

\begin{definition}
A linear transformation $\TC \colon \PS(\VC) \to \WC$, where $\WC$ is a vector space, is called a \emph{valuation}. We consider only \emph{$\LC$-valuations} or \emph{lattice valuations} that satisfy
$$
\TC([\PC + u]) = \TC([\PC]), \quad \text{for all rational polytopes }\PC \text{ and }u \in \LC,
$$ see \cite[pp. 933--988]{ValuationsAndDissections}, \cite{ValuationsOnConvecBodies}.
\end{definition}




We are mainly interested in two valuations, the first is the counting valuation $\EC([\PC]) = |\PC \cap \ZZ^d|$ and the second valuation $\FC([\PC])$, which will be significantly used in our paper, is defined by the following theorem, proved by J.~Lawrence \cite{Lawrence}, and, independently, 
by A.~Khovanskii and A.~Pukhlikov \cite{Pukhlikov}. We borrowed the formulation from  \cite[Section~13]{BarvBook}:

\begin{theorem}[\cite{Lawrence,Pukhlikov}]\label{rational_gen_th}
Let $\RS(\CC^d)$ be the space of rational functions on $\CC^d$ spanned by the functions of the type 
$$
\frac{\xB^v}{(1 - \xB^{u_1})\dots(1-\xB^{u_d})},
$$
where $v \in \ZZ^d$ and $u_i \in \ZZ^d \setminus \{\BZero\}$, for any $i \in \intint d$. Then there exists a linear transformation (a valuation)
$
\FC \colon \PS(\QQ^d) \to \RS(\CC^d)
$
such that the following properties hold:
\begin{enumerate}
\item Let $\PC \subset \RR^d$ be a non-empty rational polyhedron without lines, and let $\CCal$ be its recession cone. Let $\CCal$ be generated by rays $w_1, \dots, w_n$, for some $w_i \in \ZZ^d \setminus \{\BZero\}$, and let us define 
$$
\WC_{\CCal} = \bigl\{ \xB \in \CC^d \colon |\xB^{w_i}| < 1 \text{,~for~any~} i \in \intint n \bigr\}.
$$
Then, $\WC_{\CCal}$ is a non-empty open set and, for all $\xB \in \WC_{\CCal}$, the series
$$
\fG(\PC; \xB) = \sum\limits_{z \in \PC \cap \ZZ^d} \xB^z
$$ converges absolutely and uniformly on compact subsets of $\WC_{\CCal}$ to the function $f(\PC;\xB) = \FC([\PC]) \in \RS(\CC^d)$.
\item If $P$ contains a line, then $f(\PC;\xB) = 0$.
\end{enumerate}
\end{theorem}
If $\PC$ is a rational polyhedron, then $f(\PC;\xB)$ is called its \emph{short rational generating function}.

\begin{definition}\label{tang_cone_def}
Let $\PC \subset \VC$ be a non-empty polyhedron, and let $v \in \PC$ be a point. We define the \emph{tangent cone} of $\PC$ at $v$ by
$$
\tcone(\PC,v) = \bigl\{v+ y \colon v + \varepsilon y \in \PC, \; \text{ for some } \varepsilon > 0 \bigr\}.
$$
If an $n$-dimensional polyhedron $\PC$ is defined by a system $A x \leq b$, then, for any $v \in \PC$, it holds
\begin{equation*}
\tcone(\PC,v) = \{ x \in \VC \colon A_{\JC(v) *} x  \leq b_{\JC(v)}\},\quad
\text{where $\JC(v) = \{ j \colon A_{j *} v = b_j\}$.}
\end{equation*}

\end{definition}


It is widely known that a slight perturbation in the right-hand sides of a system $A x \leq b$ can transform the polyhedron $\PC(A,b)$ to a simple one. Here, we need an algorithmic version of this fact, presented in the following technical theorem.
\begin{theorem}\label{poly_simplification_th}
Let $A \in \ZZ^{k \times n}$, $\rank(A) = n \leq k$, $b \in \QQ^k$, $\gamma = \max\{\|A\|_{\max},\, \|b\|_{\infty}\}$, and $\PC = \PC(A,b)$ be an $n$-dimensional polyhedron.
Then, for $1/\varepsilon = 1 +  2n \cdot n^{\lceil n/2 \rceil} \cdot \gamma^n $ and the vector $t \in \QQ^k$, with $t_i = \varepsilon^{i-1}$, the polyhedron $\PC' = \PC(A,b+t)$ is simple.
\end{theorem}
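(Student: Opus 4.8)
The plan is to prove the equivalent statement that every vertex of $\PC' := \PC(A,b+t)$ lies on exactly $n$ facets. First note that since all entries $t_i=\varepsilon^{\,i-1}$ are strictly positive, $\inter\PC\subseteq\inter\PC'$, so $\PC'$ is $n$-dimensional; hence the only thing to exclude is a vertex $v$ of $\PC'$ lying on $n+1$ distinct facets. Assume such a $v$ exists. By Remark~\ref{fcone_generator_rm}, $\fcone(\PC',v)^{\circ}=\cone\bigl(A_{\JC(v)*}^{\top}\bigr)$, and this cone is full-dimensional because $v$ is a vertex of the full-dimensional $\PC'$; hence $\fcone(\PC',v)$ is a full-dimensional pointed cone, its facets correspond to the facets of $\PC'$ through $v$, and the rows of $A$ whose inequalities define those facets are pairwise non-collinear and linearly span $\RR^n$. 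Picking $n$ linearly independent such rows, with indices $i_1,\dots,i_n$, together with one further facet-defining index $i_{n+1}$, we obtain $n+1$ rows that are pairwise non-collinear, of rank $n$, and all active at $v$. Writing $v=A_{\{i_1,\dots,i_n\}*}^{-1}(b+t)_{\{i_1,\dots,i_n\}}$ and using $A_{i_{n+1}*}v=(b+t)_{i_{n+1}}$, the bordered $(n+1)\times(n+1)$ matrix $M=\bigl(\,A_{\{i_1,\dots,i_{n+1}\}*}\mid (b+t)_{\{i_1,\dots,i_{n+1}\}}\,\bigr)$ is singular, i.e. $\det M=0$. The whole proof now reduces to showing that the choice of $t$ forbids this.

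Expanding $\det M$ along its last column (equivalently splitting $b+t$ by multilinearity), $\det M=\sum_{l=1}^{n+1}c_l\bigl(b_{i_l}+\varepsilon^{\,i_l-1}\bigr)$, where $c_l$ is $\pm$ the $n\times n$ minor of $A$ obtained by deleting row $i_l$; by Hadamard's inequality $|c_l|\le n^{\lceil n/2\rceil}\gamma^{n}=:\Delta$. The combinatorial heart is the elementary lemma: if $a_1,\dots,a_{n+1}\in\RR^n$ are pairwise non-collinear and span $\RR^n$, then at least two of the determinants $\det(a_1,\dots,\widehat{a_l},\dots,a_{n+1})$ are nonzero — for otherwise, the single vector omitted from the basis formed by the others would have to lie in every coordinate hyperplane of that basis, hence vanish, contradicting non-collinearity. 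Applying this to the rows $A_{i_1*},\dots,A_{i_{n+1}*}$, at least two of the $c_l$ are nonzero, and since the exponents $i_l-1$ are distinct, at least one nonzero $c_l$ has $i_l\ge2$. Thus $P(\varepsilon):=\sum_{l:\,i_l\ge2}c_l\,\varepsilon^{\,i_l-1}$ is a genuinely nonzero polynomial in $\varepsilon$ with integer coefficients of modulus $\le\Delta$ and exponents in $\{1,\dots,k-1\}$, while the remaining part of $\det M$, namely $\sum_l c_l b_{i_l}$ together with the integer $c_l$ for the index $i_l=1$ if present, is an integer when $b$ is integral. Finally, the calibration $1/\varepsilon=1+2n\Delta$ gives $|P(\varepsilon)|\le\Delta\sum_{s\ge1}\varepsilon^{s}=\Delta/(1/\varepsilon-1)=1/(2n)<1$, and the same estimate applied to the lowest-degree monomial of $P$ yields $|P(\varepsilon)|\ge\varepsilon^{\,s_0}\bigl(1-1/(2n)\bigr)>0$. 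Hence $\det M$ is an integer plus a number in $(-1,1)\setminus\{0\}$, so $\det M\neq0$ — a contradiction. Therefore $\PC'$ is simple.

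The substantive obstacle is exactly the interplay in the second paragraph: one must see that a \emph{geometric} perturbation $t_i=\varepsilon^{\,i-1}$ turns the $t$-part of the determinant into a power series whose size is governed by its lowest-order monomial, and — more delicately — that this monomial cannot be absorbed into the integer part, which is precisely what the ``at least two nonzero minors'' lemma rules out (a lone nonzero minor could sit at index $1$ and fuse with the integer term). Fixing the explicit value $1/\varepsilon=1+2n\cdot n^{\lceil n/2\rceil}\gamma^{n}$ is then just the requirement that $2n\Delta$ dominates the Hadamard bound $\Delta$ on the minors with enough room for the geometric tail. The remaining points are bookkeeping that still needs care: verifying that $\PC'$ is full-dimensional, passing cleanly between ``facet'', ``supporting hyperplane'' and ``row of $A$'' so that the extracted $n+1$ rows are genuinely non-collinear and of rank $n$, and — since integrality of $\sum_l c_l b_{i_l}$ is used — handling $b\in\QQ^k$ (the estimate is transparent for integral $b$, and this is the only place where the hypothesis enters quantitatively).
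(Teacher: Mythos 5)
Your proposal follows the same skeleton as the paper's proof: assume a vertex of $\PC'=\PC(A,b+t)$ with $n+1$ tight constraints of rank $n$, form the bordered $(n+1)\times(n+1)$ matrix $M$ whose last column is the perturbed right-hand side, expand $\det M$ along that column, and show that the choice of $\varepsilon$ forces $\det M\neq 0$. Where you genuinely diverge is the finishing step. The paper shows the resulting polynomial in $\varepsilon$ is not identically zero (via a left-kernel vector $\binom{y}{-1}$ of $M$) and then applies a Cauchy-type lower bound on its roots; you instead isolate the part $P(\varepsilon)$ carrying positive powers of $\varepsilon$, use the ``at least two nonzero maximal minors'' lemma to show $P$ has a nonzero integer coefficient at some exponent $\geq 1$, and finish with $0<|P(\varepsilon)|<1$, so that an integer plus $P(\varepsilon)$ cannot vanish. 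That lemma is correct (it needs only that the $n+1$ rows are nonzero and of rank $n$, which your facet-based setup guarantees; pairwise non-collinearity is not even required), the Hadamard and geometric-series estimates are right, and for integral $b$ this is a clean, self-contained alternative to the paper's root-separation argument.

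The gap is exactly the one you flag: the theorem is stated for $b\in\QQ^k$, and your argument uses integrality of $\sum_l c_l b_{i_l}$, so as written it proves only the integral case. This is a substantive issue, not bookkeeping: for rational $b$ the constant part $\det\bigl(A_{\JC}\;b_{\JC}\bigr)$ plus the $\varepsilon^0$-cofactor can be a small rational that exactly cancels $P(\varepsilon)$. Indeed, one can tune rational right-hand sides without changing $\gamma$ (e.g. take in $\RR^3$ the four rows $(\pm 1,0,1),(0,\pm 1,1)$, which satisfy $a_1+a_2=a_3+a_4$, plus a base row; the four perturbed hyperplanes concur iff $b_1+b_2-b_3-b_4$ equals a specific rational of absolute value less than $2$, achievable with $\|b\|_\infty\le \|A\|_{\max}$) so that $n+1$ facet-defining constraints become tight at one vertex of $\PC(A,b+t)$; hence the stated $\varepsilon$ cannot suffice for arbitrary rational $b$, and some dependence on the denominators of $b$ is unavoidable. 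You should be aware that the paper's own proof shares this soft spot: the step $1/|\varepsilon^*|\le 2\alpha_{\max}$ implicitly assumes the relevant lowest-order coefficient has absolute value at least $1$, which holds for integer data but not in general. The natural repair, which makes your route go through verbatim, is to clear denominators first (replace $b$ by $q\,b$ for the common denominator $q$, adjusting $\gamma$, or equivalently let $1/\varepsilon$ carry an extra factor $q$); with that proviso your argument is complete.
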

\begin{proof}
Let us suppose by the contrary that there exists a vertex $v$ of $\PC'$ and a set of indices $\JC$ such that $A_{\JC} v = (b + t)_{\JC}$, $|\JC| = n + 1$ and $\rank(A_{\JC}) = n$. The last is possible iff $\det(M) = 0$, where $M = \bigl(A_{\JC}\, (b+t)_{\JC} \bigr)$. Note that $M = B + D$, where $B = \bigl(A_{\JC}\, b_{\JC} \bigr)$ and $D = \bigl( \BZero_{(n+1)\times n} \, t_{\JC} \bigr)$. We have,
\begin{multline*}
    \det(M) = \det(B) + \sum\limits_{i = 1}^{n+1} \det(B[i,t_{\JC}]) = \\
    = \det(B) + \sum\limits_{i = 1}^{n+1} \sum\limits_{j = 1}^{n+1} (-1)^{i + j} \cdot (t_{\JC})_j \cdot \det(B_{\JC\setminus\{j\} \IC\setminus\{i\}}),
\end{multline*}
where $\IC = \intint{n+1}$ and $B[i,t_{\JC}]$ is the matrix induced by the substitution of the column $t_{\JC}$ instead of the $i$-th column of $B$. 

Let us assume that $(t_{\JC})_j = \varepsilon^{d_j}$, for $j \in \IC$, where $d_j \in \ZZ$ and $0 \leq d_1 < d_2 < \dots < d_{n+1} \leq k-1$. Consequently, the condition $\det(M) = 0$ is equivalent to the following condition:
\begin{equation}\label{degen_poly}
    \det(B) + \sum\limits_{j=1}^{n+1} \varepsilon^{d_j} \cdot \left( \sum\limits_{i=1}^n (-1)^{i+j} \cdot \det(B_{\JC\setminus\{j\} \IC\setminus\{i\}})\right) = 0.
\end{equation}
Note that the polynomial \eqref{degen_poly} is non-zero. Definitely, since $\rank(A_{\JC}) = n$, we can assume that the first $n$ rows of $A_{\JC}$ are linearly independent. Consequently, there exists a unique vector $y \in \QQ^n_{\not=0}$ such that the last row of $A_{\JC}$ is a linear combination of the first rows with the coefficients vector $y$. Since $\forall \varepsilon\colon \det(M) = 0$, we have $\binom{y}{-1}^\top M = \BZero$ and, consequently, $\binom{y}{-1}^\top (b_{\JC} + t) = \BZero$. But, the last may hold only for a finite number of $\varepsilon$. That is the contradiction.

Using the well known Cauchy's bound, we have that $|\varepsilon^*| \geq \frac{1}{1 + \alpha_{\max}/\beta} = \frac{\beta}{\beta + \alpha_{\max}}$, where $\varepsilon^*$ is any root of \eqref{degen_poly}, $\alpha_{\max}$ is the maximal absolute value of the coefficients, and $\beta$ is the absolute value of the leading coefficient. Finally, $1/|\varepsilon^*| \leq 2 \alpha_{\max} \leq 2n \cdot n^{n/2} \cdot \gamma^n$, which contradicts to the Theorem's condition on $\varepsilon$.
\end{proof}

\section{Proof of Theorem \ref{all_non_zero_th}}\label{all_non_zero_proof}

\subsection{Integer points in intersections of a hypercube with hyperplanes}

\begin{lemma}\label{cube_intnum_lm}
Let $\BC = \bigl\{ x \in \RR^n \colon -r \leq x_i \leq r, \text{ for } i \in \intint n\bigr\}$ be a standard hypercube of volume $(2 r + 1)^n$. Then, for any hyperplane $\HC$, passing throw $\BZero$, we have
$$
|\BC \cap \HC \cap \ZZ^n| \leq (2 r + 1)^{n-1}.
$$
\end{lemma}
\begin{proof}
Assume that $\HC = \{ x \in \RR^n \colon a^\top x = 0\}$, for $a \in \RR^n$ and $a_1 \not= 0$, and denote $\NC = \BC \cap \HC \cap \ZZ^n$. Let $\MC$ be the set, induced by the projection $\proj_0$ of $\NC$ into the hyperplane $x_1 = 0$, that is
$$
\MC = \bigl\{ (0, x_2, \dots, x_n)^\top \in \ZZ^n \colon \exists (x_1, x_2, \dots, x_n)^\top \in \NC \bigr\}.
$$ We claim that the map $\proj_0$ is a bijection between $\NC$ and $\MC$. By the construction, $\proj_0$ is surjective. Let us show that $\proj_0$ is injective. To obtain a contradiction, suppose that there exist different $u,v \in \NC$, such that $\proj_0(u) = \proj_0(v)$. Consequently, $u_1 \not= v_1$ and $u_i = v_i$, for all $i \in \intint[2]{n}$. Then, the following sequence of implications holds:
\begin{equation*}
    a^\top u =0,\; a^\top v = 0 \quad\Longrightarrow\quad a^\top(u - v) = 0 \quad\Longrightarrow\quad a_1 (u_1 - v_1) = 0 \quad\Longrightarrow\quad u_1 = v_1,\\
\end{equation*} which contradicts to $u_1 \not= v_1$ and proves the claim.

Therefore, $|\NC| = |\MC|$. Consider now the set $\NC_0 = \BC \cap \{x \in \ZZ^n \colon x_1 = 0\}$. By the definition of $\MC$: $\MC \subseteq \NC_0$. Consequently,
$$
|\NC| = |\MC| \leq |\NC_0| = (2r+1)^{n-1},
$$ which proves the lemma.
\end{proof}

\subsection{Finishing the proof of Theorem \ref{all_non_zero_th}}
\begin{proof}
Let us fix a parameter $r$ and consider a standard hypercube $\BC = \bigl\{ x \in \RR^n \colon -r \leq x_i \leq r, \text{ for } i \in \intint n\bigr\}$ of volume $(2 r + 1)^n$. For $a \in \AC$, denote $\HC_a = \BC \cap \{x \in \RR^n \colon a^\top x = 0\}$, and let 
$$
\NC = \BC \setminus \bigcup\limits_{a \in \AC} \HC_a.
$$
Due to Lemma \ref{cube_intnum_lm}, we have $|\HC_a \cap \ZZ^n| \leq (2 r + 1)^{n-1}$. Consequently, 
\begin{multline*}
|\NC \cap \ZZ^n| \geq |\BC \cap \ZZ^n| - \sum\limits_{a \in \AC} |\HC_a \cap \ZZ^n| \geq (2 r + 1)^n - m \cdot (2 r + 1)^{n-1} = \\
= (2 r + 1)^{n-1} \cdot (2 r + 1 - m).
\end{multline*}
\begin{equation*}
\text{Therefore,}\quad \frac{|\NC \cap \ZZ^n|}{|\BC \cap \ZZ^n|} \geq \frac{2r +1 - m}{ 2 r + 1} = 1 - \frac{m}{2 r + 1}.
\end{equation*}

Let us assign $r := m$, then the last inequality becomes $\frac{|\NC \cap \ZZ^n|}{|\BC \cap \ZZ^n|} > 1/2$. Now, to find the desired vector $z \in \ZZ^n$, we uniformly sample a point $z \in \BC \cap \ZZ^n$. With a probability at least $1/2$ it will satisfy the first claim of the theorem. The second claim is satisfied just by the construction of $z$. Therefore, the expected number of sampling iterations is $O(1)$. The arithmetic complexity of a single iteration is clearly bounded by $O(n \cdot m)$, which proves the theorem.
\end{proof}

\section{Proof of Theorem \ref{main_th1}}\label{main_th1_sec}

\subsection{A recurrent formula for the generating function of a group polyhedron}

Let $\GC$ be a finite Abelian group and $g_1,\dots,g_n \in \GC$. Let additionally $r_i = \bigl|\langle g_i \rangle\bigr|$ be the order of $g_i$, for $i \in \intint n$, and $r_{\max} = \max_{i} \{r_i\}$. For $g_0 \in \GC$ and $k \in \intint n$, let $\MC(k,g_0)$ be the solutions set of the following system:
\begin{equation}\label{f_k_system}
    \begin{cases}
    \sum\limits_{i = 1}^k x_i g_i = g_0\\
    x \in \ZZ_{\geq 0}^k.
    \end{cases}
\end{equation}
Consider the formal power series 
$
\fG_k(g_0;\xB) = \sum\limits_{z \in \MC(k,g_0) \cap \ZZ^k} \xB^z.
$
For $k = 1$, we clearly have
$$
\fG_1(g_0;\xB) = \frac{x_1^s}{1 - x_1^{r_1}},\quad\text{where $s = \min\{x_1 \in \ZZ_{\geq 0} \colon x_1 g_1 = g_0\}$.}
$$ If such $s$ does not exist, we put $\fG_1(g_0;\xB) = 0$.
Note that, for any value of $x_k \in \ZZ_{\geq 0}$, the system \eqref{f_k_system} can be rewritten as
\begin{equation*}
    \begin{cases}
    \sum\limits_{i = 1}^{k-1} x_i g_i = g_0 - x_k g_k\\
    x \in \ZZ_{\geq 0}^{k-1}.
    \end{cases}
\end{equation*}
Hence, for $k \geq 1$, we have
\begin{multline}\label{f_k_recurrence}
    \fG_k(g_0;\xB) = \\
    = \frac{ \fG_{k-1}(g_0;\xB) + x_{k} \cdot \fG_{k-1}(g_0 - g_k;\xB) + \dots + x_{k}^{r_k - 1} \cdot \fG_{k-1}(g_0 - g_k \cdot (r_k - 1);\xB)} {1 - x_k^{r_k}} = \\
    = \frac{1}{1 - x_{k}^{r_k}} \cdot \sum_{i = 0}^{r_k - 1} x_k^i \cdot \fG_{k-1}(g_0 - i \cdot g_k;\xB).
\end{multline}

\begin{equation}\label{f_k_conv}
\text{Consequently,} \quad \fG_k(g_0;\xB) = \frac{\sum\limits_{i_1 = 0}^{r_1-1}\dots\sum\limits_{i_k = 0}^{r_k-1} \epsilon_{i_1,\dots,i_k} x_1^{i_1} \dots x_k^{i_k}}{(1 - x_1^{r_1})(1 - x_2^{r_2})\dots(1 - x_k^{r_k})},    
\end{equation}
where the numerator is a polynomial with coefficients $\epsilon_{i_1,\dots,i_k} \in \{0,1\}$ and degree at most $(r_1 - 1) \dots (r_k - 1)$. Additionally, the formal power series $\fG_k(g_0;\xB)$ converges absolutely to the given rational function if $|x_i^{r_i}| < 1$, for each $i \in \intint k$.  

\subsection{Simple $\Delta$-modular polyhedral cone and its generating function}

Let $A \in \ZZ^{n \times n}$, $b \in \ZZ^n$, $\Delta = |\det(A)| > 0$, $A^* = \Delta \cdot A^{-1}$, $\PC = \PC(A,b)$, and let us consider the formal power series
$$
\fG(\PC;\xB) = \sum\limits_{z \in \PC \cap \ZZ^{n}} \xB^z.
$$
Let $A = P^{-1} S Q^{-1}$ and $\sigma = S_{n n} = \Delta / \Delta_{\gcd}(A,n-1)$, where $S \in \ZZ^{n \times n}$ is the SNF of $A$ and $P,Q \in \ZZ^{n \times n}$ are unimodular matrices. After the unimodular map $x = Q x'$ and introducing slack variables $y$, the system $A x \leq b$ becomes 
$$
\begin{cases}
S x + P y = P b\\
x \in \ZZ^{n}\\
y \in \ZZ^{n}_{\geq 0}.
\end{cases}
$$
Since $P$ is unimodular, the last system is equivalent to the system
\begin{equation}\label{simple_system_group}
\begin{cases}
P y = P b \pmod{S \cdot \ZZ^n}\\
y \in \ZZ^{n}_{\geq 0}.
\end{cases}
\end{equation}
Note that points of $\PC \cap \ZZ^n$ and the system \eqref{simple_system_group} are connected by the bijective map $x = A^{-1}(b - y)$. The system \eqref{simple_system_group} can be interpreted as a group system \eqref{f_k_system}, where $\GC = \ZZ^{n} \bmod\, S$ with an addition modulo $S$, $k = n$, $g_0 = P b \bmod S$ and $g_i = P_{* i} \bmod S$, for $i \in \intint{n}$. Clearly, $\GC$ is isomorphic to $\ZZ^n/S\cdot\ZZ^n$, $|\GC| = |\det(S)| = \Delta$ and $r_{\max} \leq \sigma$.

Following the previous subsection, for $k \in \intint{n}$ and $g_0 \in \GC$, let $\MC_k(g_0)$ be the solutions set of the system $$
\begin{cases}
\sum\limits_{i = 1}^k y_i g_i = g_0\\
y \in \ZZ_{\geq 0}^k,
\end{cases} \quad\text{and}\quad
\fG_k(g_0; \xB) = \sum\limits_{y \in \MC_k(g_0)} \xB^{-\sum\limits_{i=1}^k h_{i} y_i},
$$ where $h_i$ is the $i$-th column of the matrix $A^*$.
Note that
\begin{multline}\label{group_connection_formula}
    \fG(\PC;\xB) = \sum\limits_{z \in \PC \cap \ZZ^{n}} \xB^z
    = \sum\limits_{y \in \MC_n(P b \bmod S)} \xB^{A^{-1}(b-y)} = \\
    = \xB^{A^{-1} b} \cdot \sum\limits_{y \in \MC_n(P b \bmod S)} \xB^{-\frac{1}{\Delta} A^* y} =  \xB^{A^{-1} b} \cdot \fG_n\bigl(P b \bmod S;\xB^{\frac{1}{\Delta}}\bigr).
\end{multline}
Next, we will use the formulas \eqref{f_k_recurrence} and \eqref{f_k_conv} after the substitution $x_i \to \xB^{- h_i}$, for $i \in \intint n$. For $k = 1$, we have
\begin{equation}\label{ff_k_initial}
\fG_1(g_0; \xB) = \frac{\xB^{- s h_1}}{1 - \xB^{-r_1 h_1}},\quad\text{where $s = \min\{y_1 \in \ZZ_{\geq 0} \colon y_1 g_1 = g_0 \}$.}
\end{equation}
For $k \geq 2$, we have
\begin{gather}
\fG_k(g_0;\xB) = \frac{1}{1 - \xB^{-r_k h_k}} \cdot \sum\limits\limits_{i = 0}^{r_k-1} \xB^{- i h_k} \cdot \fG_{k-1}(g_0 - i \cdot g_k; \xB) \text{ and}\label{ff_k_recur}\\    
\fG_k(g_0;\xB) = \frac{\sum\limits_{i_1 = 0}^{r_1-1}\dots\sum\limits_{i_k = 0}^{r_k-1} \epsilon_{i_1,\dots,i_k} \xB^{-(i_1 h_1 + \dots + i_k h_k)}}{(1 - \xB^{-r_1 h_1})(1 - \xB^{-r_2 h_2}) \dots (1 - \xB^{-r_k h_k})}\label{ff_k_conv},
\end{gather} where the numerator is a Laurent polynomial with coefficients $\epsilon_{i_1,\dots,i_k} \in \{0,1\}$. Clearly, the power series $\fG_k(g_0;\xB)$ converges absolutely to the given function if $|\xB^{-r_i h_i}| < 1$, for each $i \in \intint[1]{k}$. 
Due to the formulae \eqref{ff_k_conv} and \eqref{group_connection_formula}, we have
\begin{equation}\label{ff_PC_coeff}
    \fG(\PC;\xB) = \frac{\sum\limits_{i_1 = 0}^{r_1-1}\dots\sum\limits_{i_n = 0}^{r_n-1} \epsilon_{i_1,\dots,i_n} \xB^{\frac{1}{\Delta}A^*(b - (i_1,\dots,i_n)^\top)} }{\bigl(1 - \xB^{-\frac{r_1}{\Delta} h_1}\bigr)\bigl(1 - \xB^{-\frac{r_2}{\Delta} h_2}\bigr) \dots \bigl(1 - \xB^{-\frac{r_n}{\Delta} h_n}\bigr)}.
\end{equation}

Note that $\frac{r_i}{\Delta} h_i$ is an integer vector, for any $i \in \intint n$, and $\frac{1}{\Delta}A^*(b - (i_1,\dots,i_n)^\top)$ is an integer vector, for any $(i_1,\dots,i_n)$, such that $\epsilon_{i_1,\dots,i_n} \not= 0$. Indeed, by definition of $r_i$, we have $r_i P_{* i} \equiv \BZero \pmod{S \cdot \ZZ^n}$, so $\frac{r_i}{\Delta} h_i = (r_i A^{-1})_{* i} = (Q S^{-1} P r_i)_{* i}$, which is an integer vector. Vectors $(i_1,\dots,i_n)^\top$ correspond to solutions $y$ of the system \eqref{simple_system_group}, and $\frac{1}{\Delta}A^*(b - (i_1,\dots,i_n)^\top) = A^{-1}(b - y)$ is an integer vector. Additionally, note that the vectors $-\frac{r_i}{\Delta} h_i$ represent extreme rays of the recession cone of $\PC$.

Let $c \in \ZZ^{n}$ be chosen, such that $c^\top h_i \not= 0$, for any $i$. Consider the exponential sum
$$
\hat \fG_k(g_0;\tau) = \sum\limits_{y \in \MC_k(g_0)} e^{- \tau \cdot \langle c, \sum_{i=1}^k h_i y_i \rangle}
$$ that is induced from $\fG_k(g_0;\xB)$ by the substitution $x_i = e^{\tau \cdot c_i}$.
The formulae \eqref{ff_k_initial}, \eqref{ff_k_recur}, and \eqref{ff_k_conv} become
\begin{gather}
    \hat \fG_1(g_0; \tau) = \frac{e^{- \langle c, s  h_1 \rangle \cdot \tau}}{1 - e^{- \langle c, r_1  h_1 \rangle \cdot \tau}},\label{ff_k_tau_initial}\\
    \hat \fG_k(g_0;\tau) = \frac{1}{1 - e^{-\langle c, r_k  h_k \rangle \cdot \tau}} \cdot \sum\limits_{i = 0}^{r_k-1} e^{- \langle c, i h_k \rangle \cdot \tau} \cdot \hat \fG_{k-1}(g_0 - i \cdot g_k; \tau),\label{ff_k_tau_recur}\\
    \hat \fG_k(g_0;\tau) = \frac{\sum\limits_{i_1 = 0}^{r_1-1}\dots\sum\limits_{i_k = 0}^{r_k-1} \epsilon_{i_1,\dots,i_k} e^{-\langle c, i_1 h_1 + \dots + i_k h_k \rangle \cdot \tau} }{\bigl(1 - e^{-\langle c, r_1 h_1 \rangle \cdot \tau}\bigr)\bigl(1 - e^{-\langle c, r_2 h_2 \rangle \cdot \tau}\bigr) \dots \bigl(1 - e^{- \langle c, r_k h_k \rangle \cdot \tau}\bigr)}.\label{ff_k_tau_conv}
\end{gather}
Let $\chi = \max\limits_{i \in \intint n}\bigl\{|\langle c, h_i \rangle|\bigr\}$. Since $\langle c,h_i \rangle \in \ZZ_{\not=0}$, for each $i$, the number of terms $e^{-\langle c, \cdot \rangle \cdot \tau}$ is bounded by $1 + 2 \cdot k \cdot r_{\max}\cdot \chi \leq 1 + 2 \cdot k \cdot \sigma\cdot \chi$. So, after combining similar terms, the numerator's length becomes $O(k \cdot \sigma\cdot \chi)$.
In other words, there exist coefficients $\epsilon_i \in \ZZ_{\geq 0}$, such that 
\begin{equation}\label{ff_k_tau_coef}
\hat \fG_k(g_0;\tau) = \frac{\sum\limits_{i = - k \cdot \sigma \cdot \chi}^{k \cdot \sigma \cdot \chi} \epsilon_i \cdot e^{- i \cdot \tau}}{\bigl(1 - e^{-\langle c, r_1 \cdot h_1 \rangle \tau}\bigr)\bigl(1 - e^{-\langle c, r_2 h_2 \rangle \cdot \tau}\bigr) \dots \bigl(1 - e^{- \langle c, r_k h_k \rangle \cdot \tau}\bigr)}.    
\end{equation}


Let us discuss the group-operations complexity issues to find the representation \eqref{ff_k_tau_coef} of $\hat \fG_k(g_0;\tau)$, for any $k \in \intint n$ and $g_0 \in \GC$.
Clearly, to find the desired representation of $\hat \fG_1(g_0;\tau)$, for all $g_0 \in \GC$, we need $r_1 \cdot \Delta$ group operations.
Fix $g_0 \in \GC$ and $k \in \intint n$. To find $\hat \fG_k(g_0;\tau)$, for $k \geq 2$, we can use the formula \eqref{ff_k_tau_recur}. Each numerator of the term $e^{- \langle c, i h_k \rangle \cdot \tau} \cdot \hat \fG_{k-1}(g_0-i g_k;\tau)$ contains at most $1 + 2 \cdot (k-1) \cdot \sigma \cdot \chi$ non-zero terms of the type $\epsilon \cdot e^{- \langle c, \cdot \rangle \cdot \tau}$. Hence, the summation can be done with $O(k \cdot \sigma^2 \cdot \chi)$ group operations. Consequently, the total group-operations complexity can be expressed by the formula
$$
O(\Delta \cdot n^2 \cdot \sigma^2 \cdot \chi).
$$
Since the diagonal matrix $S$ can have at most $\log_2(\Delta)$ terms that are not equal to $1$, the arithmetic complexity of a single group operation is $O(\log(\Delta))$. Hence, the total arithmetic complexity is
$$
O\bigl(\Delta \cdot \log(\Delta) \cdot n^2 \cdot \sigma^2 \cdot \chi\bigr).
$$
Finally, let us show how to find the exponential form 
$$
\hat \fG(\PC;\tau) = \sum\limits_{z \in \PC \cap \ZZ^n} e^{\langle c,z \rangle \cdot \tau}
$$ of the power series $\fG(\PC;\xB)$ induced by the map $x_i = e^{c_i \cdot \tau}$.
Due to the formula \eqref{group_connection_formula}, we have
$$
\hat \fG(\PC;\tau) = e^{\langle c, A^{-1} b \rangle \cdot \tau} \cdot \hat \fG_n\bigl(P b \bmod S; \frac{\tau}{\Delta}\bigr).
$$
Due to the last formula and the formulae \eqref{ff_PC_coeff} and \eqref{ff_k_tau_coef}, we have
\begin{equation*}
    \hat \fG(\PC;\tau) = \frac{\sum\limits_{i = - n \cdot \sigma \cdot \chi}^{n \cdot \sigma\cdot \chi} \epsilon_i \cdot e^{\frac{1}{\Delta}(\langle c, A^* b \rangle- i) \cdot \tau}}{\bigl(1 - e^{-\langle c, \frac{r_1}{\Delta} \cdot h_1 \rangle \cdot \tau}\bigr)\bigl(1 - e^{-\langle c, \frac{r_2}{\Delta} \cdot h_2 \rangle \cdot \tau}\bigr) \dots \bigl(1 - e^{- \langle c, \frac{r_n}{\Delta} \cdot h_n \rangle \cdot \tau}\bigr)}.
\end{equation*}  
Again, due to \eqref{ff_PC_coeff}, we have $\langle c, \frac{r_i}{\Delta} h_i \rangle \in \ZZ_{\not=0}$,  for any $i \in \intint n$, and $\frac{1}{\Delta}(\langle c, A^* b \rangle- i) \in \ZZ$, for any $i$, such that $\epsilon_i > 0$. 
Therefore, we have proven the following:
\begin{theorem}\label{eff_representation_th}
Let $A \in \ZZ^{n \times n}$, $b \in \ZZ^n$, $\Delta = |\det(A)| > 0$, and $\PC = \PC(A,b)$. Let, additionally, $\sigma = S_{n n}$, where $S$ is the SNF of $A$, and $\chi = \max\limits_{i \in \intint n}\bigl\{|\langle c, h_i \rangle|\bigr\}$, where $h_i$ is the $i$-th column of $\Delta \cdot A^{-1}$.
Then, the formal exponential series $\hat \fG(\PC;\tau)$ can be represented as
$$
\hat \fG(\PC;\tau) = \frac{\sum\limits_{i = -n \cdot \sigma\cdot \chi}^{n \cdot \sigma\cdot \chi} \epsilon_i \cdot e^{\alpha_i \cdot \tau}}{\bigl(1 - e^{-\beta_1 \cdot \tau}\bigr)\bigl(1 - e^{-\beta_2 \cdot \tau}\bigr) \dots \bigl(1 - e^{-\beta_n \cdot \tau}\bigr)}, 
$$ where $\epsilon_i \in \ZZ_{\geq 0}$, $\beta_i \in \ZZ_{>0}$, and $\alpha_i \in \ZZ$.
This representation can be found with an algorithm having the arithmetic complexity bound
$$
O\bigl(T_{\SNF}(n) + \Delta \cdot \log(\Delta) \cdot n^2 \cdot \sigma^2 \cdot \chi\bigr),
$$ where $T_{SNF}(n)$ is the arithmetical complexity of computing the SNF for $n \times n$ integer matrices.
\end{theorem}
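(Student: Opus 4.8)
The plan is to assemble the statement from the recurrences and estimates already established in the preceding two subsections; nearly all of the work is done, so the proof is mostly bookkeeping of formulas and complexity.

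First I would recall the reduction to a group problem: writing $A = P^{-1} S Q^{-1}$ with $S$ the SNF and $P, Q$ unimodular, the system $A x \leq b$ is equivalent, via the bijection $x = A^{-1}(b - y)$, $y \in \ZZ_{\geq 0}^n$, to the group system \eqref{simple_system_group}, i.e. a copy of \eqref{f_k_system} with $\GC = \ZZ^n \bmod S$, $|\GC| = \Delta$, $g_0 = P b \bmod S$, $g_i = P_{*i} \bmod S$, and $r_{\max} \leq \sigma$. By \eqref{group_connection_formula} (after the substitution $x_i = e^{c_i \tau}$), $\hat \fG(\PC;\tau) = e^{\langle c, A^{-1} b\rangle \tau} \cdot \hat \fG_n(P b \bmod S; \tau/\Delta)$, so it suffices to produce the claimed shape for $\hat \fG_n(g_0;\tau)$ and push it through this substitution.

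Next I would invoke the recurrence \eqref{ff_k_tau_recur} and its closed form \eqref{ff_k_tau_conv}: the denominator is $\prod_{i=1}^n (1 - e^{-\langle c, r_i h_i\rangle \tau})$ and the numerator is $\sum \epsilon_{i_1,\dots,i_n} e^{-\langle c, i_1 h_1 + \dots + i_n h_n\rangle \tau}$ with $\epsilon_{i_1,\dots,i_n} \in \{0,1\}$ and $0 \leq i_j \leq r_j - 1 \leq \sigma - 1$. Since $c$ is chosen with $\langle c, h_i\rangle \in \ZZ_{\neq 0}$ and $|\langle c, h_i\rangle| \leq \chi$, every exponent $\langle c, i_1 h_1 + \dots + i_n h_n\rangle$ is an integer in $[-n\sigma\chi,\, n\sigma\chi]$; collecting like terms yields \eqref{ff_k_tau_coef}, a numerator of length $O(n \sigma \chi)$ whose coefficients are nonnegative integers (partial sums of the original $0/1$ coefficients). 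Applying $\tau \mapsto \tau/\Delta$ and the prefactor $e^{\langle c, A^{-1} b\rangle \tau}$, the denominator factors become $1 - e^{-\langle c, (r_i/\Delta) h_i\rangle \tau}$ and the numerator exponents become $\frac1\Delta(\langle c, A^* b\rangle - i)$. The integrality facts $\frac{r_i}{\Delta} h_i = r_i (A^{-1})_{*i} = Q(r_i S^{-1} P_{*i}) \in \ZZ^n$ (using $r_i P_{*i} \in S \ZZ^n$ by definition of the order $r_i$) and $\frac1\Delta(A^* b - (i_1,\dots,i_n)^\top) = A^{-1}(b - y) \in \ZZ^n$ for every solution $y$ of \eqref{simple_system_group} were verified above; hence $\beta_i := -\langle c, (r_i/\Delta) h_i\rangle \in \ZZ_{\neq 0}$ (nonzero because $\langle c, h_i\rangle \neq 0$ and $r_i > 0$) and $\alpha_i := \frac1\Delta(\langle c, A^* b\rangle - i) \in \ZZ$ whenever $\epsilon_i > 0$, which is exactly the stated representation.

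For the complexity, computing the SNF costs $T_{\SNF}(n)$; computing $\hat \fG_1(g_0;\tau)$ for all $g_0 \in \GC$ costs $O(r_1 \Delta)$ group operations; for each level $k \geq 2$ and each $g_0 \in \GC$, evaluating \eqref{ff_k_tau_recur} sums $r_k \leq \sigma$ shifted copies of an $O(k \sigma \chi)$-term expression, i.e. $O(k \sigma^2 \chi)$ group operations, hence $O(\Delta k \sigma^2 \chi)$ per level and $O(\Delta n^2 \sigma^2 \chi)$ over all $k$; since $S$ has at most $\log_2 \Delta$ diagonal entries $\neq 1$, one group operation costs $O(\log \Delta)$ arithmetic operations, giving $O(T_{\SNF}(n) + \Delta \log \Delta \cdot n^2 \sigma^2 \chi)$. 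Finally, the numerator coefficients of $\hat \fG_n(g_0;\tau)$, hence the $\epsilon_i$, depend only on $g_0 = P b \bmod S$, which ranges over the $\Delta$ elements of $\GC$, so at most $\Delta$ coefficient vectors occur and all are produced by the dynamic program. The only point demanding care is the chain of integrality and nonvanishing assertions about $(r_i/\Delta) h_i$ and the shifted exponents, together with checking that combining like terms keeps the numerator length $O(n \sigma \chi)$ with nonnegative integer coefficients — but all of this is already carried out explicitly in the discussion preceding the theorem, so the proof amounts to citing those computations.
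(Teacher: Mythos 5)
Your proposal is correct and follows essentially the same route as the paper: reduction to the group system \eqref{simple_system_group} via the Smith Normal Form, the recurrence \eqref{ff_k_tau_recur} with like-term collection giving numerator length $O(n\sigma\chi)$, the integrality checks for $\frac{r_i}{\Delta} h_i$ and the shifted exponents, and the same group-operation accounting $O(\Delta \cdot n^2 \cdot \sigma^2 \cdot \chi)$ with $O(\log\Delta)$ arithmetic cost per operation. No gaps; this matches the argument laid out in the two subsections preceding the theorem.
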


\subsection{Handling the general case}

Following Remark \ref{forms_reduction_rm}, we will only work with polytopes $\PC$ defined with systems in the canonical form. Denote $\gamma = \max\{\|A\|_{\max}, \|b\|_{\infty}\}$, $\beta = \min\limits_{i \in \intint{n+m}}\{\lceil b_i \rceil - b_i \colon b_i \notin \ZZ \}$, and $\varepsilon = \min\{\beta/2,\, (1+2n \cdot n^{\lceil n/2 \rceil} \cdot \gamma)^{-1}\}$. If all $b_i$ are integer, we put $\beta = +\infty$, so the formula for $\varepsilon$ remains correct. Then, by Theorem \ref{poly_simplification_th}, the polytope $\PC' = \PC(A,b+t)$ is simple, where the vector $t$ is chosen, such that $t_i = \varepsilon^{i-1}$, for $i \in \intint{n+m}$. By the construction, $\PC \cap \ZZ^n = \PC' \cap \ZZ^n$. From this moment, we assume that $\PC$ is a simple polytope.

Using Definition \ref{tang_cone_def} for tangent cones, the Brion's Theorem \cite{Brion} (see also \cite[Chapter~6]{BarvBook}) gives:
\begin{multline*}
[\PC] = \sum\limits_{v \in \vertex(\PC)} \bigl[\tcone(\PC,v)\bigr] = \\ 
= \sum\limits_{v \in \vertex(\PC)} \bigl[\PC(A_{\JC(v)},b_{\JC(v)})\bigr] \lmod,
\end{multline*}
where $\JC(v) = \{j \colon A_{j *} v = b_j\}$. Due to the seminal work \cite{AvisFukuda} of Avis \& Fukuda, all vertices of the simple polyhedron $\PC$ can be enumerated with $O\bigl( (m+n) \cdot n \cdot |\vertex(\PC)| \bigr)$ arithmetic operations. Due to Lee, Paat, Stallknecht \& Xu \cite{ModularDiffColumns}, it can be assumed that $n+m = O(n^2 \cdot \Delta^2)$. Therefore, the vertices of $\PC$ can be enumerated with $O(\nu \cdot n^3 \cdot \Delta^2)$ operations, which is negligible with respect to the desired complexity bound.
Denote $f(\PC; \xB) = \FC([\PC]) \in \RS(\QQ^n)$, for any rational polyhedron $\PC$, where $\FC$ is the evaluation considered in Theorem \ref{rational_gen_th}.
Note that $f(\PC(B, u); \xB) = f(\PC(B, \lfloor u \rfloor); \xB)$, for any $B \in \ZZ^{n \times n}$ and $u \in \QQ^n$. So, due to Theorem \ref{rational_gen_th}, we can write
\begin{equation*}
    f(\PC;\xB) = \sum\limits_{v \in \vertex(\PC)} f\Bigl(\PC\bigl(A_{\JC(v)}, \lfloor b_{\JC(v)} \rfloor \bigr); \xB\Bigr).
\end{equation*}
Due to results of the previous subsection, each term $f\Bigl(\PC\bigl(A_{\JC(v)}, \lfloor b_{\JC(v)} \rfloor \bigr); \xB\Bigr)$ has the form \eqref{ff_PC_coeff}.
To find the value of $|\PC \cap \ZZ^n| = \lim\limits_{\xB \to \BUnit} f(\PC;\xB)$, we follow Chapters~13 and 14 of \cite{BarvBook}. Define the set $\EC$ of \emph{edge directions} by the following way
$$
h \in \EC \;\Longleftrightarrow\; h \text{ is a column of } -A^{*}_{\JC(v)} \text{ for some $v \in \vertex(\PC)$},
$$ where $B^* = |\det(B)| \cdot B^{-1}$, for arbitrary invertable $B$. Assume that a vector $c \in \ZZ^n$ is chosen, such that $c^\top h \not= 0$, for each $h \in \EC$. Denote additionally $\chi = \max\limits_{h \in \EC} \bigl\{|c^\top h|\bigr\}$. Substituting $x_i = e^{c_i \cdot \tau}$, let us consider the exponential function
\begin{equation*}
    \hat f(\PC;\tau) = \sum\limits_{v \in \vertex(\PC)} \hat f\Bigl(\PC\bigl(A_{\JC(v)}, \lfloor b_{\JC(v)} \rfloor \bigr); \tau \Bigr).
\end{equation*}
Due to \cite[Chapter~14]{BarvBook}, the value $|\PC \cap \ZZ^n|$ is a constant term in the Tailor series of the function $\hat f(\PC;\tau)$, so we just need to compute it.
Let us fix some term $\hat f\bigl(\PC(B, u); \tau\bigr)$ of the previous formula. Due to Theorem \ref{eff_representation_th}, it can be represented as 
$$
\hat f\bigl(\PC(B, u); \tau\bigr) = \frac{\sum\limits_{i = - n \cdot \sigma\cdot \chi}^{n \cdot \sigma\cdot \chi} \epsilon_i \cdot e^{\alpha_i \cdot \tau}}{\bigl(1 - e^{-\beta_1 \cdot \tau}\bigr)\bigl(1 - e^{-\beta_2 \cdot \tau}\bigr) \dots \bigl(1 - e^{-\beta_n \cdot \tau}\bigr)}, 
$$ where $\epsilon_i \in \ZZ_{\geq 0}$, $\beta_i \in \ZZ_{>0}$, and $\alpha_i \in \ZZ$.

Again, due to \cite[Chapter~14]{BarvBook}, we can see that the constant term in Tailor series for $\hat f\bigl(\PC(B, u); \tau\bigr)$ is exactly
\begin{equation}\label{constant_Tailor}
\sum\limits_{i = - n \cdot \sigma\cdot \chi}^{n \cdot \sigma \cdot \chi} \frac{\epsilon_i}{\beta_1 \dots \beta_n} \sum\limits_{j = 0}^n \frac{\alpha_i^j}{j!} \cdot \toddp_{n-j}(\beta_1, \dots, \beta_n),
\end{equation}
where $\toddp_j(\beta_1,\dots,\beta_n)$ is a homogeneous polynomial of degree $j$, called the \emph{$j$-th Todd polynomial} on $\beta_1,\dots,\beta_n$. 
Due to \cite[Theorem~7.2.8, p.~137]{AlgebracILP}, the values of $\toddp_{j}(\beta_1,\dots,\beta_n)$, for $j \in \intint n$, can be computed with an algorithm that is polynomial in $n$, and the bit-encoding length of $\beta_1,\dots,\beta_n$. Moreover, it follows from the theorem's proof that the arithmetical complexity can be bounded by $O(n^3)$.   
Since $\sigma \leq \Delta$, due to Theorem \ref{eff_representation_th}, the total arithmetic complexity to find the value of \eqref{constant_Tailor} can be bounded by
$$
O\bigl(n^3 + T_{SNF}(n) + \Delta^3 \cdot \log(\Delta) \cdot n^2 \cdot \chi \bigr).
$$
Due to \cite{SNFOptAlg}, $T_{SNF}(n) = O(n^3)$. Assuming that $O(n^2 \cdot \chi)$ dominates $O(n^3)$, the last bound can be rewritten to
$
O\bigl(\Delta^3 \cdot \log(\Delta) \cdot n^2 \cdot \chi \bigr).
$
The constant term in Tailor series for the complete function $\hat f(\PC; \tau)$ can be found just by summation. It gives the arithmetic complexity bound
\begin{equation}\label{algo_complexity_chi}
O\bigl(\nu \cdot n^2 \cdot \Delta^3 \cdot \log(\Delta) \cdot \chi \bigr).    
\end{equation}

\subsection{How to choose a hyperplane that avoids all the edge directions?}

In the previous subsection, we made the assumption that the vector $c \in \ZZ^n$ is chosen such that $c^\top h \not= 0$, for any $h \in \EC$. Let us present an algorithm that generates a vector $c$ with a respectively small value of the parameter $\chi = \max\limits_{h \in \EC} \bigl\{|c^\top h|\bigr\}$. The main idea is concentrated in Theorem \ref{all_non_zero_th}. 

Since the polytope $\PC$ is assumed to be simple, each vertex $v \in \vertex(\PC)$ corresponds to exactly $n$ edge directions. Consequently, $2 \cdot |\EC| = \nu \cdot n$. Choose some basis sub-matrix $B$ of $A$. Note that $B h \not= \BZero$ and $(B h)_i \in \intint[-\Delta]{\Delta}$, for any $h \in \EC$ and $i \in \intint n$. Next, we use Theorem \ref{all_non_zero_th} to the set $B \cdot \EC$, which produces a vector $z$, such that 
\begin{enumerate}
    \item $z^\top B h \not= 0$, for each $h \in \EC$;
    \item $\|z\|_{\infty} \leq \nu \cdot n$.
\end{enumerate}
Now, we assign $c := B^\top z$. By the construction, we have $c^\top h \not= 0$ and $|c^\top h| = |z^\top B h| \leq n^2 \cdot \nu \cdot \Delta$, for each $h \in \EC$. Therefore, $\chi \leq n^2 \cdot \nu \cdot \Delta$.

\subsection{What is the total algorithm arithmetic complexity?}
Finally, let as estimate the total algorithm complexity. Combining the formula \eqref{algo_complexity_chi} with our bound for $\chi$, it gives
$$
O\bigl(\nu^2 \cdot n^4 \cdot \Delta ^4 \cdot \log(\Delta)\bigr),
$$ which proves Theorem \ref{main_th1}.

\subsection{Proof of Corollary \ref{main_corr1}}\label{main_corr1_subs}
The presented complexity bounds follow from the different ways to estimate the value $\nu$. The first bound trivially follows from the inequalities $\nu \leq \binom{n+m}{n} = \binom{n+m}{m} \lesssim \frac{e^m \cdot (n+m)^m}{m^m} = O\bigl(\frac{n}{m} + 1\bigr)^m$. 
To obtain the second bound, we refer to the seminal result, due to McMullen~\cite{MaxFacesTh}. Together with the formula from \cite[Section~4.7]{Grunbaum} for the number of facets of a cyclic polytope, it follows that the maximal number of vertices in an $n$-dimensional polyhedron with $k$ facets is bounded by $$\xi(n,k) = \begin{cases}
    \frac{k}{k-s} \binom{k-s}{s},\text{ for }n = 2s\\
    2\binom{k-s-1}{s},\text{ for }n = 2s+1\\
    \end{cases} = O\left(\frac{k}{n}\right)^{n/2}.$$
Clearly, $\nu \leq \xi(n,n+m)$, and $\nu = O\bigl( \frac{n+m}{n} \bigr)^{\frac{n}{2}}$. So, the second bound holds. Due to Lee, Paat, Stallknecht \& Xu \cite{ModularDiffColumns}, we can assume that $n+m = O(n^2 \cdot \Delta^2)$. Substituting the last formula to the second bound, we obtain $\nu = O(n^{\frac{n}{2}} \cdot \Delta^n)$, and the third bound holds. 

Finally, let us show how to handle the case, when $\PC$ is an unbounded $n$-dimensional polyhedron. Clearly, we need to distinguish between two possibilities: $|\PC \cap \ZZ^n| = 0$ and $|\PC \cap \ZZ^n| = \infty$.  Let us choose any vertex $v$ of $\PC$ and consider a set of indices $\JC$, such that $|\JC| = n$, $A_{\JC} v = b_{\JC}$ and $\rank(A_{\JC}) = n$. For the first and second bounds, we add a new inequality $c^\top x \leq c_0$ to the system $A x \leq b$, where $c^\top = \sum_{i=1}^n (A_{\JC})_{i *}$ and $c_0 = c^\top v + \|c\|_1 \cdot n \Delta + 1$. Let $A' x \leq b'$ be the new system. Due to the seminal ILP sensitivity bound of Cook, Gerards, Schrijver \& Tardos \cite{Sensitivity_Tardos}, $|\PC \cap \ZZ^n| = 0$ iff $|\PC(A',b') \cap \ZZ^n| = 0$. Since $\PC(A',b')$ is a polytope and $\Delta(A') \leq n \Delta$, we just need to add an additional multiplicative factor of $O(\frac{d}{m} +1)^2 \cdot n^4$ to the first bound and $O(n^4)$ to the second bound. To deal with third bound, we just need to add additional inequalities $A_{\JC} x \geq b_{\JC} - \|A_{\JC}\|_{\max} \cdot n^2 \Delta \cdot \BUnit$ to the system $A x \leq b$. The polyhedron becomes bounded and the sub-determinants stay unchanged, and we follow the original scenario.

\bibliographystyle{amsplain}
\bibliography{grib_biblio}

\end{document}